\newtheorem{theorem}{Theorem}
\newtheorem{corollary}{Corollary}
\def\BibTeX{{\rm B\kern-.05em{\sc i\kern-.025em b}\kern-.08em
    T\kern-.1667em\lower.7ex\hbox{E}\kern-.125emX}}
\begin{document}

\title{Phantom Edges in the Problem Hamiltonian: A Method for Increasing Performance and Graph Visibility for QAOA}

\author{\IEEEauthorblockN{Quinn Langfitt}
\IEEEauthorblockA{\textit{Theoretical Division} \\
\textit{Los Alamos National Laboratory}\\
Los Alamos, NM 87544 \\
Email: qlangfitt@u.northwestern.edu}
\and
\IEEEauthorblockN{Reuben Tate}
\IEEEauthorblockA{\textit{CCS-3: Information Sciences} \\
\textit{Los Alamos National Laboratory}\\
Los Alamos, NM 87544 \\
Email: rtate@lanl.gov}
\and
\IEEEauthorblockN{Stephan Eidenbenz}
\IEEEauthorblockA{\textit{CCS-3: Information Sciences} \\
\textit{Los Alamos National Laboratory}\\
Los Alamos, NM 87544 \\
}
}
% \and
% \IEEEauthorblockN{4\textsuperscript{th} Given Name Surname}
% \IEEEauthorblockA{\textit{dept. name of organization (of Aff.)} \\
% \textit{name of organization (of Aff.)}\\
% City, Country \\
% email address or ORCID}
% \and
% \IEEEauthorblockN{5\textsuperscript{th} Given Name Surname}
% \IEEEauthorblockA{\textit{dept. name of organization (of Aff.)} \\
% \textit{name of organization (of Aff.)}\\
% City, Country \\
% email address or ORCID}
% \and
% \IEEEauthorblockN{6\textsuperscript{th} Given Name Surname}
% \IEEEauthorblockA{\textit{dept. name of organization (of Aff.)} \\
% \textit{name of organization (of Aff.)}\\
% City, Country \\
% email address or ORCID}
% }

\maketitle

% \begin{abstract}
% The Quantum Approximate Optimization Algorithm (QAOA) is a variational quantum algorithm that's designed to solve combinatorial optimization problems. However, a key limitation of QAOA is that it is a ``local" algorithm, meaning it can only optimize over local properties of the graph for finite circuit depths. In this work, we present a new QAOA ansatz that introduces only one additional parameter to the standard ansatz---regardless of system size---allowing QAOA to ``see'' more of the graph at a given depth $p$. We achieve this by modifying the target graph to include additional $\alpha$-weighted edges, with $\alpha$ serving as a tunable parameter. This modified graph is then used to construct the phase operator and allows QAOA to explore a wider range of the graph's features. We derive a general formula for our new ansatz at $p=1$ and analytically show an improvement in the approximation ratio for cycle graphs. We also provide numerical experiments that demonstrate significant improvements in the approximation ratio for the Max-Cut problem over the standard QAOA ansatz for $p=1$ and $p=2$ on random regular graphs up to 16 nodes.
% \end{abstract}

\begin{abstract}
The Quantum Approximate Optimization Algorithm (QAOA) is a variational quantum algorithm that can be used to approximately solve combinatorial optimization problems. However, a major limitation of QAOA is that it is a ``local" algorithm for finite circuit depths, meaning it can only optimize over local properties of the graph. In this paper, we present Phantom-QAOA, a new QAOA ansatz that introduces only one additional parameter to the standard ansatz---regardless of system size---allowing QAOA to ``see'' more of the graph at a given depth $p$. We achieve this by modifying the target graph to include additional $\alpha$-weighted edges, with $\alpha$ serving as a tunable parameter. This modified graph is then used to construct the phase operator and allows QAOA to explore a wider range of the graph's features. We derive a general formula for our new ansatz at $p=1$ and analytically show an improvement in the approximation ratio for cycle graphs. We also provide numerical experiments that demonstrate significant improvements in the approximation ratio for the Max-Cut problem over the standard QAOA ansatz for $p=1$ and $p=2$ on random regular graphs up to 16 nodes.
\end{abstract}

\begin{IEEEkeywords}
Quantum Algorithms and Protocols, Hybrid Quantum-Classical Systems
\end{IEEEkeywords}

\section{Introduction}
\label{sec:introduction}
Over the past several years, researchers have studied the potential ability of quantum computers  to outperform classical computers on a variety of problems and tasks. One such category of problems are combinatorial optimization problems, where -- given a finite discrete set of candidate solutions -- the objective is to find an optimal solution with respect to some cost function. Such problems appear in many different domains including power systems \cite{soares2018survey}, logistics \cite{BLN18}, VLSI \cite{barahona1988application}, and more. Our work focuses on the well-studied Max-Cut problem, which is to determine an optimal partition of the vertices of the graph into two groups such that the number of edges between the two groups is maximized; however, the main ideas and techniques in our work can also be used for other combinatorial optimization problems.

For approximately a decade, the primary framework for solving combinatorial optimization problems on quantum devices has been the Quantum Approximate Optimization Algorithm (QAOA) introduced by Farhi et al. in 2014 \cite{farhi2014quantumapproximateoptimizationalgorithm}. Since its introduction, several variants and modifications of the QAOA algorithm have been proposed; we refer the interested reader to \cite{blekos2024review,keller2024quantum}, which contain overviews of such modifications. Modifications of the QAOA algorithm primarily fall into one of four types: (1) modification of the initial quantum state, (2) modification of the mixing Hamiltonian (which drives the evolution of the quantum state), (3) various schemes for initialization and optimization of the parameters of the quantum circuit, and (4) modification of the cost Hamiltonian, which encodes the cost function we wish to optimize. This work focuses on the last category (cost Hamiltonian modification) which is one of the least-studied types of QAOA modifications.

Since the cost Hamiltonian typically encodes the cost function of the underlying combinatorial optimization problem, it is not clear a priori what alternatives one should consider. In particular, with the exception of an affine shift in the cost function, any other cost Hamiltonian will most likely break QAOA's connection to Quantum Adiabatic Computing (QAC); this connection ensures that QAOA (with optimal parameters) approaches an optimal solution as the number of circuit layers $p$ tends to infinity. In this work, we consider an alternative cost Hamiltonian that is formed by a modified graph, which is constructed by starting with the original input graph and adding additional $\alpha$-weighted edges, with $\alpha$ being a tunable parameter. If one allows optimization of the $\alpha$-parameter, then this QAOA variant does at least as well as standard QAOA, as the standard QAOA is recovered at $\alpha = 0$. We call our new QAOA variant Phantom-QAOA.

Previous work by \cite{wilkie2024qaoarandomsubgraphphase} has considered QAOA with a cost Hamiltonian obtained by a modified graph; however, they primarily consider modified graphs that are a \emph{subgraph} of the original graph. The initial motivation for our approach is somewhat different: it is well-known that QAOA is a ``local" algorithm and thus ``seeing the whole graph" is a necessary condition to ensure that QAOA does well on arbitrary graphs; by adding additional edges to the cost Hamiltonian, this potentially allows QAOA to ``see the graph more quickly" as the depth increases. We discuss this motivation in more detail in the discussion section (Section \ref{sec:discussion}).

Similar to Multi-Angle QAOA---and a recent generalization, eXpressive QAOA \cite{Vijendran_2024}, which adds an extra Pauli-Y component in the mixer Hamiltonian---our method introduces additional parameters; however, Phantom-QAOA only adds a single parameter, regardless of problem size or number of QAOA layers. On the other hand, Multi-Angle QAOA adds an additional $p(n+m-2)$ parameters where $m$ is the number of terms in the original cost Hamiltonian; the addition of such parameters significantly increases the parameter search space, thus making the cost landscape increasingly difficult and computationally intensive to search over.

In this work, we demonstrate theoretical results for single-layer QAOA with our approach. For any way of adding $\alpha$-weighted edges, we provide a general formula for the expectation of an edge being cut for single-layer QAOA. Using this formula, we demonstrate an example where our approach does strictly better than standard QAOA; more specifically, for even-length cycles, we show that, when placing the additional edges in $G$ to form triangles, our ansatz yields an approximation ratio of $0.7925$ whereas standard QAOA only has an approximation ratio of $0.75$.

Additionally, we provide empirical results that show that when the parameter search space is restricted, certain additions of $\alpha$-weighted edges in the cost Hamiltonian yield a noticeable improvement in approximation ratio on random regular graphs. When the parameter search space is unrestricted, empirical results suggest that, regardless of whether or not the graph is regular, improvements can be obtained for any $0 < \alpha  < 1$ and that all such non-trivial $\alpha$ yield roughly the same level of improvement; the caveat is that such unrestricted search spaces are difficult to optimize over, e.g., optimal parameters may be significantly far from the origin.

\section{Background}
\label{sec:background}

\subsection{Max-Cut Problem}
The problem we focus on in this paper is the Max-Cut problem—a canonical NP-hard combinatorial optimization problem \cite{WOEGINGER2005210}. The problem is defined as follows: given an undirected graph \(G=(V,E)\), partition the vertex set \(V\) into two disjoint subsets \(S\) and \(V \setminus S\) in such a way that the number of edges with one endpoint in \(S\) and the other in \(V \setminus S\) is maximized, i.e.,
\[
\max_{S \subset V} \; \left|\{ (u,v) \in E : u \in S,\; v \in V \setminus S \}\right|.
\]
The best known classical approach to approximating this optimum is the Goemans-Williamson algorithm \cite{Goemans}, which employs semidefinite programming coupled with randomized rounding to achieve a worst-case approximation ratio of 0.878. 

An equivalent formulation assigns to each vertex \(i \in V\) a binary variable \(z_i \in \{0,1\}\), so that a candidate solution is represented by the vector \(z = (z_1, z_2, \dots, z_n)\). The corresponding cut value is given by
\[
C(z) = \frac{1}{2} \sum_{(u,v) \in E} \Bigl(1 - (-1)^{z_u+z_v}\Bigr).
\]
Alternatively, we can reformulate the above cost function in terms of computational basis states by using the standard Pauli-\(Z\) operator representation (with \(Z\ket{0} = \ket{0}\) and \(Z\ket{1} = -\ket{1}\)). The cost function can then be expressed as
\[
C = \frac{1}{2}\sum_{(u,v) \in E} \Bigl(1 - Z_u Z_v\Bigr).
\]

\subsection{Quantum Approximate Optimization Algorithm}

QAOA is a hybrid quantum-classical algorithm in which a parameterized quantum circuit generates sample solutions, and a classical optimizer varies the circuit parameters in order to maximize (or minimize) the objective function. The QAOA ansatz is inspired by adiabatic quantum computing \cite{Albash_2018}. The mixer operator $U(B, \beta_i)$ and the phase operator $U(C, \gamma_i)$ are applied in an alternating sequence, and the $\beta_i$ and $\gamma_i$ are the respective parameters for the $i$th layer. The number of layers, $p$, is called the circuit depth. 

% The problem we focus on solving in this paper is the Max-Cut problem, which is an NP-Complete and 

The mixer operator is generated by the transverse-field Hamiltonian, which promotes transitions between different computational basis states. It is given by:
\[
U(B, \beta_i) = e^{-i \beta_i B}, \quad B = \sum_{j=1}^n X_j,
\]
where $X_j$ is the Pauli-X operator acting on the $j$th qubit.

The phase operator typically encodes the problem we wish to solve. For the Max-Cut problem, the phase operator is generated by the cost Hamiltonian $C$ that represents the cut value of a given partition of the graph's vertices. In our formulation, each vertex \(i\) in the graph \(G = (V,E)\) is associated with a binary variable \(z_i \in \{0,1\}\) so that a computational basis state \(|z_1z_2\cdots z_n\rangle\) corresponds to a particular partition of the vertices---for example, with \(z_i = 0\) indicating that vertex \(i\) belongs to one subset and \(z_i = 1\) indicating membership in the other. The phase operator is then given by:
\[
U(C, \gamma_i) = e^{-i \gamma_i C}, \quad C = \frac{1}{2} \sum_{(u,v) \in E} (1 - Z_u Z_v),
\]
where $Z_u$ and $Z_v$ are Pauli-Z operators acting on the qubits representing vertices $u$ and $v$, and $E$ is the set of edges in the graph $G = (V, E)$. 

The variational quantum state after $p$ layers is 
\[
|\psi(\boldsymbol{\gamma}, \boldsymbol{\beta})\rangle = \prod_{i=1}^p U(B, \beta_i) U(C, \gamma_i) |s\rangle,
\]

where $\ket{s}$ is the initial state. In this work, we follow the standard practice of initializing the state in the most excited state of the mixer Hamiltonian: $\ket{+}^{\otimes n}$. Note that this initial state is the equal superposition of all $2^n$ possible cuts in the graph. Equivalently, measuring $\ket{s}$ would cause each vertex to independently have equal probability of being on either side of the cut; one can show that this implies that each edge is cut with probability $1/2$ and hence measuring $\ket{s}$ would yield a cut value of $1/2$ the number of edges. 

To obtain the expected Max-Cut value, we conjugate the state with the Cost Hamiltonian: 

\begin{align}
    \langle C \rangle = \langle \psi(\boldsymbol{\gamma}, \boldsymbol{\beta})| C | \psi(\boldsymbol{\gamma}, \boldsymbol{\beta}) \rangle.
\end{align}

Maximizing the expectation value of $\langle C \rangle$ with respect to the parameters $\boldsymbol{\gamma}$, $\boldsymbol{\beta}$ is the objective of QAOA. Let $F_p$ represent the expectation value of the cost Hamiltonian after $p$ layers. The maximum achievable expectation value, $\text{max}_{\boldsymbol{\gamma}, \boldsymbol{\beta}} F_p(\gamma, \beta)$, increases with $p$. In the limit as $p \to \infty$, QAOA (with optimal parameters) is guaranteed to achieve an approximation ratio of $1$ \cite{farhi2014quantumapproximateoptimizationalgorithm}.

% \subsection{Max-Cut}
% The problem we focus on solving in this paper is the Max-Cut problem. This problem is of relevance because it is an NP-complete problem. The best classical algorithm is the Goemans-Williamson algorithm \textcolor{red}{citation}. % Explain why we care about this problem.

% The Max-Cut problem is described as follows: 

\begin{figure}[t]
    \centering
    \scalebox{0.75}{\begin{tikzpicture}[scale=1.5]
    % Nodes
    \node[circle, draw, minimum size=2em, inner sep=2pt, label=center:$u$] (u) at (0,0) {};
    \node[circle, draw, minimum size=2em, inner sep=2pt, label=center:$v$] (v) at (2,0) {};
    \node[circle, draw, minimum size=2em, inner sep=2pt, label=center:$w$] (w) at (1,1.5) {};
    \node[circle, draw, minimum size=2em, inner sep=2pt, label=center:$x$, fill=green, fill opacity=0.5, text opacity=1] (x) at (0,-1.5) {};
    \node[circle, draw, minimum size=2em, inner sep=2pt, label=center:$y$, fill=green, fill opacity=0.5, text opacity=1] (y) at (2,-1.5) {};

    % Edges
    \draw[line width=0.8mm] (u) -- (v);
    \draw[line width=0.8mm] (u) -- (w);
    \draw[line width=0.8mm] (v) -- (w);
    \draw[line width=0.8mm] (u) -- (x);
    \draw[line width=0.8mm] (x) -- (y);
    \draw[red, line width=0.5mm] (v) -- (y);  // Red line from previous modification, now thinner.

    % Additional modifications
    \draw[line width=0.8mm] (x) -- (w);  // Black line from x to w.
    \draw[line width=0.8mm] (y) -- (u);  // Black line from y to u.

    % Maintain the existing red edges for clarity, but thinner
    \draw[solid, red, line width=0.5mm] (w) -- (y);  // This connection remains red, but now thinner.
    \draw[solid, red, line width=0.5mm] (v) -- (x);  // This connection remains red, but now thinner.
\end{tikzpicture}}
    \caption{Visualization of graph $G'$, where the black lines are the edges from the original graph $G$ and additional $\alpha$-weighted edges are denoted by red lines. For the edge \((xy)\), \(d = 2\) and \(d' = 1\) denote two existing and one phantom neighbor of \(x\), respectively. Similarly, for vertex $y$, \(e = 1\) and \(e' = 2\). The triangles formed around this edge include: one with vertex \(u\) (\(f = 1\)), one with vertex \(w\) (\(f' = 1\)), and another with vertex \(v\) (\(f'' = 1\)).}
    \label{fig:example_graph}
\end{figure}

\subsection{Graph Visibility in QAOA}
One key limitation of QAOA is that it is a local algorithm for finite $p$. As observed by Farhi et al. \cite{farhi2014quantumapproximateoptimizationalgorithm, farhi2020typcalcase, farhi2020worstcase}, the total unitary applied to any edge term only depends on the structure of the graph within a radius of $p$ hops from that edge. The algorithm therefore optimizes over a linear combination of subgraphs of the target graph, where the subgraphs are weighted by the number of occurrences. As $p$ increases, the subgraph size increases until it covers the entire graph. 

To demonstrate QAOA's locality, let's evaluate a single edge term $C_{uv} = \frac{1}{2}(I - Z_u Z_v)$ in the cost Hamiltonian (recall that $C = \sum_{uv} C_{uv}$). The expected cut value of the edge $(u, v)$ at depth $p=1$ is

\[
\langle C_{uv} \rangle = \langle s| e^{i \gamma C} e^{i \beta B} C_{uv} e^{-i \beta B} e^{-i \gamma C} | s \rangle.
\]

However, adjusting the cost Hamiltonian that is used in the phase operatory $U(C, \gamma)$ can lead to changes in how much the QAOA can "see" for a given $p$. One example of the application of a modified phase operator is shown in \cite{wilkie2024qaoarandomsubgraphphase}. We discuss our new phase operator in detail in the next section. 

% \section{Modified Phase Operator}
\section{Phase Operator with Phantom Edges}
\label{sec:modified_phase_operator}

% Let $G = (V, E)$ be the graph whose Max-Cut we wish to optimize, and $G' = (V, E')$ be the augmented graph with be the 
Mathematically, we introduce a new phase operator $U(\tilde{C}, \gamma)$, where $\tilde{C}$ is the Hamiltonian including additional small-weight "phantom" edges:
\[ 
\tilde{C} = C + \alpha C' 
\]
Here, $C$ is the original Max-Cut cost Hamiltonian, $C'= \frac{1}{2}\sum_{(u,v) \in E'} \Bigl(1 - Z_u Z_v\Bigr)$ represents the cut value for the additional edges, and $\alpha$ is the weight of the additional edges. The modified phase operator is then:
\[ 
U(\tilde{C}, \gamma) = e^{-i \gamma \tilde{C}} = e^{-i \gamma (C + \alpha C')}.
\]

The optimization target remains the original cost Hamiltonian $C$ for the Max-Cut problem. The expectation value of $C$ with respect to the state $|\psi(\gamma, \beta)\rangle$ is:
\[ 
\langle C \rangle = \langle \psi(\boldsymbol{\gamma}, \boldsymbol{\beta}) | C | \psi(\boldsymbol{\gamma}, \boldsymbol{\beta}) \rangle 
\]
By incorporating additional edge terms in the phase operator, and allowing its weight $\alpha$ to be a tunable parameter, the ansatz should be able to explore a larger portion of the graph, increasing the likelihood of finding a better cut. This expectation value, with the modified phase operator and for $p=1$, can be expressed as:
\[ 
\langle C \rangle = \langle s| U^\dagger(\tilde{C}, \gamma) U^\dagger(B, \beta) C U(B, \beta) U(\tilde{C}, \gamma) |s \rangle.
\]

% \subsection{Constructing the Modified Graph}
\subsection{Methods for Additional Weighted-Edge Placement}
\label{sec:constructionOfModifiedGraph}
In this paper, we present two methods for constructing the augmented graph $G'$, which is used to define the modified cost Hamiltonian $\tilde{C}$ that generates the phase operator $U(\tilde{C}, \gamma)$ in our QAOA ansatz. These approaches are referred to as the "Full" and "Triangle" methods. 

The Full method consists of adding $\alpha$-weighted edges between all pairs of vertices in $G$ that do not already share an edge, forming a fully connected graph $G'$. In other words, the set of edges that are added are simply the edges of the complement of $G$. Mathematically, if $E$ is the set of edges in the original graph $G$, the new set of edges in $G'$, denoted by $E'$, includes all possible edges between vertices: 
\[
E' = E \cup \{(u, v) \mid (u, v) \notin E, \, u, v \in V(G)\}.
\]
The additional edges in $G'$ are then weighted by the factor $\alpha$ in the new cost Hamiltonian $C'$.

The Triangle method adds $\alpha$-weighted edges only between vertices that are two hops away in $G$, forming triangles with the existing edges. Mathematically, for any vertex pair $(u, v)$ such that the shortest path between $u$ and $v$ in $G$ is exactly two edges, we add an edge $(u, v)$ to $G'$:
\[
E' = E \cup \{(u, v) \mid \text{dist}_G(u, v) = 2\}.
\]

The motivation behind this approach is that it has been observed that worst-case approximation ratios occur when the graph is triangle-free \cite{farhi2020worstcase}. 

\section{Derivation of Max-Cut Expectation Value for Phantom-QAOA}
\label{sec:Derivation}

% \begin{figure}[ht]
%     \centering
%     \scalebox{0.75}{\input{example_graph.tex}}
%     \caption{Visualization of the original graph $G$ (with edges denoted by black lines) and additional phantom edges (denoted by red lines). We can see that the edge $(xy)$, with $d = |N(x) \setminus {y}| = 2$, $d' = |N'(x)| = 1$, $e = |N(y) \setminus {x}| = 1$, $e' = |N'(y)| = 2$, and $f = 1$ (triangle with vertex u), $f' = 1$ (triangle with vertex w), and $f'' = 1$ (one triangle with vertex v).}
%     \label{fig:example_graph}
% \end{figure}

The formula for the unweighted Max-Cut problem for the original QAOA ansatz at $p=1$ (QAOA$_1$) is contained in the following Theorem \cite{hadfield2018quantumalgorithmsscientificcomputing}:

\begin{theorem}
\label{thm:original_maxcut}
Consider the QAOA$_1$ state $|\gamma, \beta\rangle$ for Max-Cut on a graph $G$.
\begin{itemize}
\label{thm:general_unweighted_formula}
    \item For each edge $(uv)$,
    \begin{align*}
    \langle \gamma, \beta | C_{uv} | \gamma, \beta \rangle = \frac{1}{2} + \frac{1}{4} \sin(4\beta) \sin \gamma (\cos^d \gamma + \cos^e \gamma) \\ - \frac{1}{4} \sin^2(2\beta) \cos^{d+e-2f} \gamma (1 - \cos^f (2\gamma)) =: \langle C_{uv} \rangle(d, e, f)
    \end{align*}
    where $d = \deg(u) - 1$, $e = \deg(v) - 1$, and $f$ is the number of triangles in the graph containing $(uv)$.
    \item The overall expectation value is
    \[
    \langle C \rangle = \langle \gamma, \beta | C | \gamma, \beta \rangle = \sum_{(d,e,f)} \langle C_{uv} \rangle(d, e, f) \chi(d, e, f),
    \]
    where $\chi(d, e, f)$ gives the number of edges in $G$ with neighborhood parameters $(d, e, f)$.
\end{itemize}
\end{theorem}

% In this section, we generalize Theorem ~\ref{thm:original_maxcut} for our new ansatz at $p=1$, where each edge added to $G$ to construct $G'$, as encoded in the new phase operator, is assigned the weight $\alpha$. 
In this section, we generalize Theorem~\ref{thm:original_maxcut} for Phantom-QAOA at $p=1$.
% where the phase operator $U(\tilde{C}, \gamma) = e^{-i \gamma \tilde{C}}$ is constructed using the modified cost Hamiltonian $\tilde{C} = C + \alpha C'$, incorporating additional edges from $G'$ weighted by $\alpha$. We denote the 
% The derivation and resulting equation are similar to Eq. 5.10 in \cite{hadfield2018quantumalgorithmsscientificcomputing}. In this section, we present the theorem and its proof for our ansatz to ensure completeness. 
We acknowledge that the expression itself may not provide immediate insights or be easily interpretable. We provide an example in the following subsection (Section ~\ref{sec:analytic_example}) to improve interpretability as well as show the utility of introducing additional edges so that additional triangles are formed with the edges in $G$. For additional clarity on the parameters of the following theorem, we include an example graph with labeled parameters in Fig. ~\ref{fig:example_graph}.

\begin{theorem}
\label{thm:general_modified_ansatz_formula}
% Let \(G = (V,E)\) be the target graph and let \(G' = (V,E')\) be an augmented graph obtained by adding extra edges, each assigned a weight \(\alpha\). 
For each vertex \(u\in V\), define its neighbor sets as
$N(u)=\{w\in V : (uw)\in E\} \quad$ and $N'(u)=\{w\in V : (uw)\in E'\setminus E\}$. Then, for the single-layer Phantom-QAOA algorithm with parameters \(\beta\),\(\gamma\) and $\alpha$, the expected Max-Cut for each edge $(uv) \in E$ is given by
\begin{align}
\label{eqn:theorem}
    \langle C_{uv} \rangle &= \frac{1}{2} + \frac{\sin(4\beta)}{4} \sin (\gamma) \bigg((\cos^d (\gamma) \cos^{d'} (\alpha \gamma) \nonumber \\ & \quad\quad\quad\quad\quad+ \cos^e (\gamma) \cos^{e'} (\alpha \gamma) \bigg) \nonumber \\
    & - \frac{\sin^2(2\beta)}{4} \Bigg[\cos^{d+e-2f}(\gamma) \cos^{d'+e'}(\alpha\gamma) (1 - \cos^f (2\gamma)) \nonumber \\
    & + \cos^{d+e-f'}(\gamma) \cos^{d'+e'-f'}(\alpha\gamma) \frac{1}{2} \bigg( \nonumber \\ & \quad\quad \cos^{f'}(\gamma (1 - \alpha)) -\cos^{f'}(\gamma(1+\alpha)) \bigg) \nonumber \\
    & + \cos^{d+e}(\gamma) \cos^{d'+e'-2f''}(\alpha\gamma) (1 - \cos^{f''}(2\alpha\gamma))\Bigg].
\end{align}

where,
\begin{itemize}
    \item $d = |N(u) \setminus \{v\}|$, $d' = |N'(u)|$.
    \item $e = |N(v) \setminus \{u\}|$, $e' = |N'(v)|$
    \item $f$: the number of triangles containing $(uv)$ with edges $(uw)$, $(vw)$ in $G$.
    \item $f'$: the number of triangles containing $(uv)$ with one edge $(uw)$ in $G$ and one edge $(vw)$ in $G'/G$.
    \item $f''$: the number of triangles containing $(uv)$ with edges $(uw)$, $(vw)$ in $G'/G$.
\end{itemize}

\end{theorem}

\begin{proof}
Let $\tilde{C} = C + \alpha C'$, where $C$ is the original cost Hamiltonian for $G$ and $C'$ is the cost Hamiltonian for the edges in $G' \setminus G$. Note that $C_{uv} = \frac{1}{2}(I - Z_uZ_v)$ and therefore 
\begin{align}
%\label{eq:Cuv_expectation_formula}
    \langle \gamma, \beta | C_{uv} | \gamma, \beta \rangle = \frac{1}{2} - \frac{1}{2} \langle s | e^{i\gamma \tilde{C}} e^{i\beta B} Z_u Z_v e^{-i\beta B} e^{-i\gamma \tilde{C}} | s \rangle.
\end{align}
Using the Pauli-Solver algorithm detailed in \cite{hadfield2018quantumalgorithmsscientificcomputing}, we can express the expectation value $\langle Z_{u}Z_{v}  \rangle$ after one iteration of the mixer layer as:

\begin{align}
\label{eqn:expectation_formula}
    &\langle Z_{u}Z_{v}  \rangle \nonumber \\ &= \langle s | e^{i\gamma \tilde{C}} \left( c^2 Z_u Z_v + sc(Y_u Z_v + Z_u Y_v) + s^2 Y_u Y_v \right) e^{-i\gamma \tilde{C}} | s \rangle.
\end{align}

where the initial state $|s\rangle = |+\rangle^{\otimes n}$, and we let $c = \cos(2\beta)$ and $s = \sin(2\beta)$.

The term $Z_u Z_v$ commutes with the phase operator $e^{-i\gamma \tilde{C}}$ and, therefore, does not contribute to the expectation value. The remaining Pauli terms $Y_uZ_v, Z_uY_v$ and $Y_uY_v$ do not commute with the phase operator and thus contribute to the final expectation value.

The expectation value for the term $Y_u Z_v$ is given by:

\begin{align}
\label{eqn:YuZv_expectation}
    & \langle s | e^{-i\gamma \tilde{C}} (Y_u Z_v) e^{i\gamma \tilde{C}} | s \rangle \nonumber \\ &=
    \langle s | e^{-i\gamma Z_uZ_v} e^{-i\gamma \sum_{w} Z_u Z_w} e^{-i\gamma \sum_{w'} Z_v Z_{w'}} Y_u Z_v | s \rangle \nonumber \\
    &= \langle s | (\cos(\gamma)I - i\sin(\gamma)Z_u Z_v) \prod_{i=1}^{d} (\cos(\gamma)I - i\sin(\gamma)Z_u Z_{w_i}) \nonumber \\ & \quad \times \prod_{i=1}^{d'} (\cos(\alpha\gamma)I - i\sin(\alpha\gamma)Z_u Z_{w_i'}) | s \rangle,
\end{align}

where $d=|N(u) \setminus v|$, $w_i$ is a vertex in $N(u) \setminus v$, $d'=|N'(u)|$, and $w'_i$ is a vertex in $N'(u)$. The only term that contributes to the expectation value in Eq.~\ref{eqn:YuZv_expectation} is the one proportional to $Z_u Z_v \otimes I^{\otimes d} \otimes I^{\otimes d'} \cdot Y_u Z_v = -i X_u$. Thus, we have:

\begin{align}
\label{eqn:reduced_YuZv}
    \langle s | e^{i\gamma \tilde{C}} Y_u Z_v e^{-i\gamma \tilde{C}} | s \rangle &= \langle s | -i s' c'^d c''^{d'} (-i X_u) | s \rangle \nonumber \\
    &= -s' c'^d c''^{d'},
\end{align}

where we let $s' = \sin(\gamma)$, $c' = \cos(\gamma)$, and $c'' = \cos(\alpha\gamma)$.

Due to symmetry, the expression for the expectation value of the term $Z_uY_v$ will be the same as in Eq.~\ref{eqn:reduced_YuZv}, except now we use the variables $e = |N(v) \setminus u|$ and $e' = |N'(v)|$. Thus, 

\begin{align}
\label{eqn:triangle_free_modfied}
    & \langle s | e^{-i\gamma \tilde{C}} \left(Y_u Z_v + Z_u Y_v\right) e^{i\gamma \tilde{C}} | s \rangle \notag \\
     &= \sin \gamma \left(\cos^d \gamma \cos^{d'} \alpha \gamma + \cos^e \gamma \cos^{e'} \alpha \gamma \right).
\end{align}

For the final term in Eq. ~\ref{eqn:expectation_formula}, we can rewrite it as

% we follow a similar procedure as detailed in \cite{hadfield2018quantumalgorithmsscientificcomputing} except we inlcude the weighted terms $\alpha Z_uZ_{wi}$ from $\tilde{C}$:

\begin{align}
    &\langle s | e^{i\gamma \tilde{C}} Y_u Y_v e^{-i\gamma \tilde{C}} | s \rangle = \langle s | \prod_{i=1}^{d} (\cos(\gamma) I - i \sin(\gamma) Z_u Z_{w_i}) \nonumber \\
    &\quad \times \prod_{i=1}^{d'} (\cos(\alpha \gamma) I - i \sin(\alpha \gamma) Z_u Z_{w'_i}) \nonumber \\
    &\quad \times \prod_{j=1}^{e} (\cos(\gamma) I - i \sin(\gamma) Z_v Z_{w_j}) \nonumber \\
    &\quad \times \prod_{j=1}^{e'} (\cos(\alpha \gamma) I - i \sin(\alpha \gamma) Z_v Z_{w'_j}) Y_u Y_v | s \rangle.
\end{align}

% The analysis is similar to the one detailed in \cite{hadfield2018quantumalgorithmsscientificcomputing} for deriving Eq. 5.14 \textcolor{red}{To do: include the details}, but now we must consider the following three cases for a pair of vertices $(u, v) \in E$:

The analysis is similar to the one detailed in \cite{hadfield2018quantumalgorithmsscientificcomputing} for deriving Eq. 5.14, but now we must consider the following three cases for a pair of vertices $(u, v) \in E$:

\begin{enumerate}
    \item Both edges $(u, w_i)$ and $(v, w_i)$ are contained in $G$.
    \item One of the edges, either $(u, w_i)$ or $(v, w_i)$, is contained in $G$, while the other edge is contained in $G'\setminus G$.
    \item Both edges $(u, w_i)$ and $(v, w_i)$ are contained in $G'\setminus G$.
\end{enumerate}

% More specifically, for case $1$, the first-order terms that contribute are proportional to $I^{\otimes d+e-2} * I^{\otimes d'+e'} * Z_uZ_w*Z_vZ_w$, of which there are $f$ many. Note that $Z_uZ_{w_i}*Z_vZ_{w_i} = I$ and therefore even order terms do not contribute. The next higher order terms result from three different pairs ($Z_uZ_{w_i}, Z_vZ_{w_i}$), of which there are $\binom{f}{3}$ many, and the contribution is proportional to $s'^6$. We can then follow a similar pattern for odd higher order terms. 

The lowest-order contributions from Case $1$ arise from terms proportional to 
\[
I^{\otimes (d+e-2)} \otimes I^{\otimes (d'+e')} \otimes (Z_u Z_w) \otimes (Z_v Z_w),
\]
with each such term corresponding to one triangle; there are \(f\) triangles of this type. Since for any given neighbor \(w_i\) we have
$Z_u Z_{w_i} \cdot Z_v Z_{w_i} = I$, any contribution involving an even number of these factors cancels out. Consequently, only terms with an odd number of sine factors (i.e., odd-order terms) contribute. The next nonzero contribution arises at third order, involving three pairs \((Z_u Z_{w_i}, Z_v Z_{w_i})\); there are \(\binom{f}{3}\) such terms, each proportional to \(s'^6\). This pattern continues for all higher odd orders, and we get 

\begin{align*}
    \binom{f}{1} c'^{d+e-2} c''^{d+e} s'^2 + \binom{f}{3} c'^{d+e-6}c''^{d+e}s'^6 + \\ \binom{f}{5} c'^{d+e-10}c''^{d+e}s'^{10} + \ldots
\end{align*}

Similarly, in Case 2, the first-order contributions are proportional to
\[
I^{\otimes (d+e-1)} \otimes I^{\otimes (d'+e'-1)} \otimes (Z_u Z_w) \otimes (Z_v Z_{w'}),
\]
corresponding to \(f'\) triangles. In Case 3, the first-order contributions are proportional to
\[
I^{\otimes (d+e)} \otimes I^{\otimes (d'+e'-2)} \otimes (Z_u Z_{w'}) \otimes (Z_v Z_{w'}),
\]
with \(f''\) triangles. As in Case 1, only odd-order terms contribute, and the higher-order contributions in both cases follow the same combinatorial pattern. We thus obtain the following:

% Similarly, for cases $2$ and $3$, the first-order terms that contribute are $I^{\otimes d+e-1}*I^{\otimes d'+e'-1}*Z_uZ_w*Z_vZ_w'$ and $I^{\otimes d+e}*I^{\otimes d'+e'-2}*Z_uZ_w'*Z_vZ_w'$, of which there are $f'$ and $f''$ many, respectively. We follow a similar pattern as before for case $1$ for cases $2$ and $3$ for the higher order terms, and we obtain the following:

% More specifically, we see that the terms   

% \begin{align*}
%     \binom{f}{1} c'^{d+e-2} c''^{d+e} s'^2 + \binom{f}{3} c'^{d+e-6}c''^{d+e}s'^6 + \\ \binom{f}{5} c'^{d+e-10}c''^{d+e}s'^{10} + \ldots
% \end{align*}

% which corresponds to the triangles in case $1$. Similarly, 

% and similarly, for case $2$:

% \begin{align*}
%     \binom{f'}{1} c'^{d+e-1} c''^{d+e-1} s's'' + \binom{f'}{3} c'^{d+e-3}c''^{d+e-3}(s's'')^6 + \dots
% \end{align*}

% and for case $3$:

% \begin{align*}
%     \binom{f''}{1} c'^{d+e} c''^{d+e-2} s''^2 + \binom{f''}{3} c'^{d+e}c''^{d+e-6}s''^6 + \ldots
% \end{align*}

% Thus, 

\begin{align}
    &\langle s | e^{i\gamma C} Y_u Y_v e^{-i\gamma C} | s \rangle \nonumber \\ 
    &= c'^{d+e-2f} c''^{d'+e'} \sum_{i=1,3,5,\ldots} \binom{f}{i} (c'^2)^{f-i} (s'^2)^i \nonumber \\
    &+ c'^{d+e-f'} c''^{d'+e'-f'} \sum_{i=1,3,5,\ldots} \binom{f'}{i} (c'c'')^{f'-i} (s's'')^i \nonumber \\
    &+ c'^{d+e} c''^{d'+e'-2f''} \sum_{i=1,3,5,\ldots} \binom{f''}{i} (c''^2)^{f''-i} (s''^2)^i \nonumber
\end{align}

Applying the binomial theorem and simplifying via trigonometric identities leads to the following expression:

% \begin{align}
% \label{eqn:triangle_modified}
%     &\langle s | e^{i\gamma C} Y_u Y_v e^{-i\gamma C} | s \rangle \nonumber \\
%     &= \cos^{d+e-2f} c''^{d'+e'} (1 - \cos^f (2\gamma)) \notag \\
%     &+ \cos^{d+e-f'} c''^{d'+e'-f'} \frac{1}{2} \left( (c'c'' + s's'')^{f'} - (c'c'' - s's'')^{f'} \right) \notag \\
%     & + \cos^{d+e} c''^{d'+e'-2f''} (1 - \cos^{f''}(2\alpha\gamma)).
% \end{align}

\begin{align}
\label{eqn:triangle_modified}
    &\langle s | e^{i\gamma C} Y_u Y_v e^{-i\gamma C} | s \rangle \nonumber \\
    &= c^{d+e-2f} c''^{d'+e'} (1 - \cos^f (2\gamma)) \notag \\
    &+ c^{d+e-f'} c''^{d'+e'-f'} \frac{1}{2} \left( \cos^{f'}(\gamma(1-\alpha)) - \cos^{f'}(\gamma(1 + \alpha) \right) \notag \\
    & + c^{d+e} c''^{d'+e'-2f''} (1 - \cos^{f''}(2\alpha\gamma)).
\end{align}

Substituting Eq. ~\ref{eqn:triangle_free_modfied} and Eq. ~\ref{eqn:triangle_modified} into Eq. ~\ref{eqn:expectation_formula} we obtain the final expression for $\langle C_{uv} \rangle$ in terms of $\gamma, \beta$ and $\alpha$.

\end{proof}

To facilitate later analysis, we present the triangle-free case (where both $G$ and $G'$ are triangle-free) as a corollary:

\begin{corollary}[Triangle-Free Case]
In the special case where the graph is triangle-free, meaning there are no triangles involving the vertices \(u\) and \(v\) such that \(f = f' = f'' = 0\), Eq.~\eqref{eqn:theorem} simplifies to:

\begin{align}
\label{eqn:corollary_triangle_free}
    &\langle C_{uv} \rangle \\
    &= \frac{1}{2} + \frac{\sin(4\beta)}{4} \sin \gamma \left(\cos^d \gamma \cos^{d'} (\alpha \gamma) + \cos^e \gamma \cos^{e'} (\alpha \gamma) \right).\notag
\end{align}

\end{corollary}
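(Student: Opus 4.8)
The plan is to obtain this corollary as an immediate specialization of Theorem~\ref{thm:general_modified_ansatz_formula} rather than re-running the Pauli-Solver computation. Since $u$ and $v$ lie in no triangle of $G$ or of $G'$, the triangle counts satisfy $f = f' = f'' = 0$ by definition, while the degree-type quantities $d, e, d', e'$ are unaffected by this restriction. I would therefore substitute $f = f' = f'' = 0$ directly into Eq.~\eqref{eqn:theorem} and simplify the bracketed $\sin^2(2\beta)$ term.

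The key observation is that each of the three summands inside that bracket carries a factor that collapses to zero when the corresponding triangle count is zero. In the first summand, $1 - \cos^{f}(2\gamma)$ becomes $1 - \cos^{0}(2\gamma) = 1 - 1 = 0$; in the third summand, $1 - \cos^{f''}(2\alpha\gamma)$ becomes $1 - 1 = 0$ in the same way; and in the second summand the difference $(\cos\gamma\cos\alpha\gamma + \sin\gamma\sin\alpha\gamma)^{f'} - (\cos\gamma\cos\alpha\gamma - \sin\gamma\sin\alpha\gamma)^{f'}$ becomes $1 - 1 = 0$ since any quantity raised to the zeroth power is $1$. Hence the entire bracket vanishes, the $-\tfrac{\sin^2(2\beta)}{4}[\,\cdots\,]$ contribution drops out, and what remains is exactly the claimed expression $\langle C_{uv}\rangle = \tfrac12 + \tfrac{\sin(4\beta)}{4}\sin\gamma\big(\cos^d\gamma\cos^{d'}(\alpha\gamma) + \cos^e\gamma\cos^{e'}(\alpha\gamma)\big)$.

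As an independent consistency check one can re-derive the same result from the intermediate steps of the theorem's proof: with no triangles through $(u,v)$, the odd-index binomial sums appearing in the expression for $\langle s|e^{i\gamma C}Y_uY_v e^{-i\gamma C}|s\rangle$ (Eq.~\eqref{eqn:triangle_modified}) are empty, so that term is identically $0$; then Eq.~\eqref{eqn:expectation_formula} retains only the $sc(Y_uZ_v + Z_uY_v)$ piece, and substituting Eq.~\eqref{eqn:triangle_free_modfied} together with $sc = \tfrac12\sin(4\beta)$ reproduces the formula. I do not anticipate a genuine obstacle here — this is a routine substitution — and the only point worth stating explicitly is the convention $x^{0} = 1$ for the trigonometric bases involved, which is precisely what makes the bracket collapse.
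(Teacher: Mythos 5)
Your proposal is correct and matches the paper's (implicit) argument exactly: the corollary is obtained by substituting $f = f' = f'' = 0$ into Eq.~\eqref{eqn:theorem}, whereupon each of the three summands in the $\sin^2(2\beta)$ bracket vanishes via the $x^0 = 1$ convention. The consistency check via Eq.~\eqref{eqn:triangle_free_modfied} is a nice bonus but not needed.
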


\subsection{Example}
\label{sec:analytic_example}

In this section, we analytically show an improvement in the approximation ratio for a cycle graph. This example demonstrates the utility of adding new edges in $G'$ such that additional triangles are formed with the existing edges in $G$.

\begin{figure}[t]
    \centering
    \begin{minipage}[t]{0.48\columnwidth}
        \centering
        \begin{tikzpicture}[scale=1]
            % Draw the 8-vertex cycle graph in black with smaller nodes
            \foreach \i in {1,2,...,8} {
                \node[circle, draw, fill=black, inner sep=1pt, minimum size=0.5em] (N\i) at (\i*45:1.5) {};
            }
            \foreach \i in {1,2,...,8} {
                \pgfmathtruncatemacro{\j}{mod(\i,8) + 1}
                \draw[black, line width=0.7mm] (N\i) -- (N\j);  % Reduced line width
            }

            % Draw the Hamiltonian cycle in the complement graph in solid red
            \draw[red, line width=0.3mm] (N1) -- (N4);
            \draw[red, line width=0.3mm] (N4) -- (N7);
            \draw[red, line width=0.3mm] (N7) -- (N2);
            \draw[red, line width=0.3mm] (N2) -- (N5);
            \draw[red, line width=0.3mm] (N5) -- (N8);
            \draw[red, line width=0.3mm] (N8) -- (N3);
            \draw[red, line width=0.3mm] (N3) -- (N6);
            \draw[red, line width=0.3mm] (N6) -- (N1);
        \end{tikzpicture}
    \end{minipage}
    \hfill
    \begin{minipage}[t]{0.48\columnwidth}
        \centering
        \begin{tikzpicture}[scale=1]
            % Draw the 8-vertex cycle graph in black with smaller nodes
            \foreach \i in {1,2,...,8} {
                \node[circle, draw, fill=black, inner sep=1pt, minimum size=0.5em] (N\i) at (\i*45:1.5) {};
            }
            \foreach \i in {1,2,...,8} {
                \pgfmathtruncatemacro{\j}{mod(\i,8) + 1}
                \draw[black, line width=0.7mm] (N\i) -- (N\j);
            }

            % Draw the Hamiltonian cycle in the complement graph in solid red, hopping two nodes over
            \draw[red, line width=0.3mm] (N1) -- (N3);
            \draw[red, line width=0.3mm] (N3) -- (N5);
            \draw[red, line width=0.3mm] (N5) -- (N7);
            \draw[red, line width=0.3mm] (N7) -- (N1);
            \draw[red, line width=0.3mm] (N2) -- (N4);
            \draw[red, line width=0.3mm] (N4) -- (N6);
            \draw[red, line width=0.3mm] (N6) -- (N8);
            \draw[red, line width=0.3mm] (N8) -- (N2);
        \end{tikzpicture}
    \end{minipage}
    \caption{\footnotesize Examples of 8-vertex cycle graphs $G$ shown in black and additional $\alpha$-weighted edges shown in red. The left image shows a $G'$ that is regular and triangle-free, while the right image shows a $G'$ that is also regular but contains triangles.}
    \label{fig:combined_graphs}
\end{figure}

Let $G$ be an arbitrary-sized cycle graph. Let's consider optimizing the Max-Cut expectation value for $G$ using the standard QAOA ansatz. For a $D$-regular triangle-free graph, the expectation value is given by

\begin{equation}
    \label{eqn:unweighted_regular}
    \langle C \rangle = \frac{m}{2} + \frac{m}{2} \sin 4\beta \sin \gamma \cos^{D-1} \gamma,
\end{equation}

where $m$ is the number of edges in the graph \cite{hadfield2018quantumalgorithmsscientificcomputing}. It is not difficult to show that the maximum expectation value occurs at $(\gamma^*, \beta^*) = \left(\arctan\left(\frac{1}{\sqrt{D-1}}\right), \frac{\pi}{8}\right)$, and thus, the maximum value of this expectation is

\[
    \max_{\gamma, \beta} \langle C \rangle = \frac{m}{2} + \frac{m}{2} \frac{1}{\sqrt{D}} \left( \frac{D-1}{D} \right)^{(D-1)/2}.
\]

Applying the above formula, the optimal parameters occur at $(\gamma^*,\beta^*) = (\pi/4, \pi/8)$ and the maximum cut value that can be achieved with the original ansatz is $\frac{3}{4}m$.

% regular, triangle-free case
Now, let's consider the Phantom-QAOA ansatz with a phase operator reflecting a new graph $G'$. If we add the $\alpha$-weighted edges to $G$ such that $G'$ is triangle-free (left graph in Fig.~\ref{fig:combined_graphs}), then the QAOA cut expectation is symmetric across each edge in $G$. For the general case, where the cycle graph has any number of vertices and the $\alpha$-weighted edges connect all vertices that are three hops away, we apply Eq.~\ref{eqn:corollary_triangle_free} with parameters $d = e = 1$ and $d' = e' = 2$ for all edges. By summing the expectation $\langle C_{uv} \rangle$ over all edges in $G$, we obtain

\begin{align}
\label{eq:regular_triangle_free}
    \langle C \rangle = \frac{m}{2} + \frac{m}{2} \sin (4\beta) \sin (\gamma) \cos (\gamma) \cos^{2} (\alpha \gamma). \quad
\end{align}

Here, $m$ represents the total number of edges in the graph. For $\alpha = 0$, we recover the original formula in Eq.~\ref{eqn:unweighted_regular}. The optimal approximation ratio still occurs at $\alpha = 0$, and we see that the maximum cut remains $\frac{3}{4}m$.

% regular, trianlge case
Next, consider a $G'$ containing triangles (right graph in Fig.~\ref{fig:combined_graphs}). Again, due to symmetry, each edge has the same parameters $d, e$, and $f$. Substituting $d=e=1$, $d'=e'=2$, $f'=1$, and $f=f''=0$ into Eq. ~\ref{eqn:theorem} and summing over all edges for the general cycle graph gives

\begin{equation}
\label{eqn:8_node_cycle_add_hc_w_triangles}
\begin{split}
    \langle C \rangle &= \frac{m}{2} + \frac{m}{2} \sin\left(4\beta\right) \sin\left(\gamma\right) \cos\left(\gamma\right) \cos^{2}\left(\alpha \gamma\right) \\ &- m \sin^{2}(2\beta) \cos(\gamma) \cos^{3}\left(\alpha \gamma\right) \sin(\gamma) \sin(\alpha \gamma).
\end{split}
\end{equation}

Notice that the first two terms are the same for Eq.~\ref{eq:regular_triangle_free} and Eq.~\ref{eqn:8_node_cycle_add_hc_w_triangles},  but the third term allows for greater expressibility and leads to an improved approximation ratio for some $\alpha > 0$. To show this, we can resuse the optimal parameters $(\gamma^*, \beta^*) = (\pi/4, \pi/8)$ from the standard QAOA ansatz. The Max-Cut expectation as a function of $\alpha$ for the cycle graph becomes

\begin{align} 
    \langle C \rangle = \frac{m}{2} + \frac{m}{4} \cos^{2}\left(\frac{\pi}{4}\alpha\right) - \frac{m}{4}\cos^{3}\left(\frac{\pi}{4}\alpha\right) \sin\left(\frac{\pi}{4}\alpha\right).
\end{align}

It is easy to see that the function is periodic with a maximum cut value of $0.7925m$. For even-lengthed cycles, this results in an improved approximation ratio of $\frac{0.7925m}{m} = 0.7925$ for Phantom-QAOA, compared to the optimal approximation ratio of $\frac{0.75m}{m} = 0.75$ for the standard QAOA ansatz. Note that greater Max-Cut values can be found if we optimize $\gamma, \beta$ along with $\alpha$.

\section{Numerical Experiments}
\label{sec:numerical_experiments}

\subsection{Experimental Setup}

For our numerical experiments, we evaluated $\alpha$ values ranging from $0.0$ to $0.5$ in increments of $0.05$. For each $\alpha_i$, we optimized the angles $\gamma$ and $\beta$ within the ranges $\gamma \in (-\pi, \pi)$ and $\beta \in (-\pi/4, \pi/4)$ using the BFGS optimizer with $10$ random initial guesses. For $p = 1$, we used the optimal values $\gamma_i^*$ and $\beta_i^*$ for each $\alpha_i$ from this initial optimization.

For $p = 2$, we ran a second optimization pass where, for each $\alpha_i$, we used all the previously optimized parameters $\{\gamma_i^*, \beta_i^*\}_{i=1}^{k}$ from the first pass as starting points. This second pass allowed the optimizer to explore a wider range of starting points that were closer to the optimal parameters.

For the following results, we compare the two methods mentioned previously: the Full and Triangle methods. 
%As discussed in Section \ref{sec:constructionOfModifiedGraph}, the Full Method is a straightforward approach that involves constructing a graph $G'$ by adding an $\alpha$-weighted edge for every missing edge in the original graph $G$, making $G'$ fully connected. In contrast, the Triangle Method only adds an $\alpha$-weighted edge between nodes in $G$ if they are two hops away, thereby creating triangles.
We applied both methods at depths $p=1$ and $p=2$ for nodes ranging from $6$ to $16$ and for all possible degrees\footnote{It is an elementary result of graph theory that it is impossible to have an $k$-regular graph with $n$ vertices if both $n$ and $k$ are odd.} for each node. The regular graphs were generated randomly using the \texttt{Graphs.SimpleGraphs.random\_regular\_graph} function in the Julia library \texttt{Graphs.jl} \cite{Graphs2021}. For each node and degree combination, we averaged the results over 10 non-isomorphic graphs whenever possible.

\begin{figure}[t]
  \centering
  \includegraphics[width=0.4\textwidth]{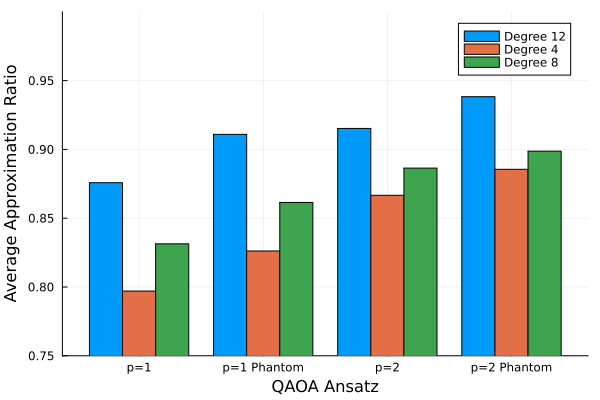}  
  \caption{\footnotesize The average approximation ratios on 16-node graphs with degrees 4, 8, and 12 across various QAOA ansatzes: the standard ansatz at p=1 and p=2 (denoted 'p=1' and 'p=2') and the Phantom-QAOA ansatz (denoted 'p=1 Phantom' and 'p=2 Phantom'). Here we used the Triangle method for Phantom-QAOA.}
  \label{fig:bar_plot}
\end{figure}

\subsection{Results}
% Could include bar plots first to give a sense of absolute values for some instances, then say that these lines are close and can just look at the improvement in AR for easier interpretation. 

Fig. ~\ref{fig:bar_plot} compares the average approximation ratios between the standard ansatz and our modified ansatz at depths $p=1$ and $p=2$. We observe that for various levels of sparseness, Phantom-QAOA offers a significant improvement in the approximation ratio over the standard ansatz. Additionally, we note that for higher degrees (degree 12, in this instance), the $p=1$ Phantom-QAOA ansatz performs as well as the standard $p=2$ QAOA. This similar trend was observed across various numbers of nodes. For lower degrees (i.e., when the graph is more sparse), the $p=1$ Phantom-QAOA ansatz yields approximation ratios that are approximately halfway between those of QAOA at $p=1$ and $p=2$.

% Improvement in approx ratio: Comparing full to triangle method
As shown in the bar plot, the optimized approximation ratios generally cluster closely together, ranging from approximately 0.75 to 0.95. To facilitate easier interpretation, we plot the differences in approximation ratios. In Fig.~\ref{fig:avg_improvement_vs_degree} we show the average improvement in approximations ratio for various ansatzes for both the Full and Triangle methods at $p=1$ and $p=2$. We plot the average improvement as a function of graph degree $d$, with each line corresponding to a particular number of nodes $n$. As shown for both $p=1$ and $p=2$, the Triangle method consistently performs as well as or better than the Full method across all graph instances. For graphs with smaller degree, the Full method yields worse approximation ratios compared to the Triangle method, while for higher degrees, both methods yield similar approximation ratios.

% Improvement in approx ratio: Comparing p=1 to p=2
When comparing the improvements for $p = 1$ and $p = 2$, we observe that the average improvement for $p = 1$ is approximately $0.04$, while for $p = 2$, it is around $0.02$. However, for $p = 2$, there is a slight upward trend in improvement as the degree increases, a pattern consistent across all values of $n$. It is also important to note that although the average improvement decreases from $p = 1$ to $p = 2$, the baseline approximation ratios at $p = 2$ are closer to $1$ (as seen by comparing the approximation ratios for the standard ansatz in ~\ref{fig:bar_plot}), and therefore there will be less room for improvement at $p = 2$. 

% Average alpha_max
We also plot the value of $\alpha$ that yields the maximum improvement in the approximation ratio, denoted as $\alpha^*$ in Fig. ~\ref{fig:alpha_max_vs_degree}. The comparison between the Triangle and the Full methods follows a similar trend as for the approximation ratios: for higher degrees, $\alpha^*$ is the same for both methods. Additionally, we observe an upward trend in $\alpha^*$ as the degree increases, consistent across both methods and for both values of $p$.

\section{Discussion} 
\label{sec:discussion}
% Discuss original idea: weights of new edges had to be small. However, we found that the weights could still be large (close to 1) and still provide an improvement in approximation ratio. Increasing alpha instead changes the periodicity of the landscape. 

The original idea behind our proposed method was that the additional edges that are incorporated in the problem Hamiltonian would have weights that were small enough to allow the algorithm to ``see" more of the graph at a given depth $p$, while ensuring that the additional edges do not significantly impact the optimization of the target cost function. However, when the cost function remains the same, increasing $\alpha$ does not necessarily correspond to the increased likliehood of an edge being cut, but rather changes the periodicity of the cost landscape. This change in periodicity is touched upon in previous work related to QAOA for the weighted Max-cut problem \cite{Sureshbabu_2024, 10.1145/3584706, 9951313}.

To illustrate this effect on periodicity, in Fig.~\ref{fig:qaoa_search_space_comparison} we show a comparison of cost landscapes for the $12$-node cycle graph between the standard ansatz and Phantom-QAOA. We see that for the modified cost landscape that although the value of $\alpha=0.7$ is relatively high, there still exists peaks that yield an improved approximation ratio over the original ansatz as long as the parameter search spaces is expanded from $\gamma \in [-\pi, \pi]$, $\beta \in [-\pi/4, \pi/4]$ (green border) to $\gamma \in [-2\pi, 2\pi]$, $\beta \in [-\pi/2, \pi/2]$ (blue border). On the other hand, when the search is restricted to the original parameter space, the peak energy values for the Phantom-QAOA ansatz are lower than those for the standard ansatz. For practical purposes, we restrict our cost landscape to the region $\gamma \in [-\pi, \pi]$ and $\beta \in [-\pi/4, \pi/4]$, as optimizing over the entire possible region would be intractable. Also, the global optimum will on average occur near zero for non-periodic functions \cite{10.1145/3584706, Boulebnane2023, Brandhofer2022}. 

% Add discussion on results for erdos renyi graphs: improvement was not observed in most cases over the range of gamma and beta considered, however we were able to find improvements when we allowed the optimizer to optimize over the entere possible region
We also ran experiments on randomly generated Erdos-Renyi graphs with varying levels of probability. In most instances, we found that the approximation ratios did not increase when the optimization of $\gamma$ and $\beta$ was restricted to the region $\gamma \in [-\pi, \pi]$ and $\beta \in [-\pi/4, \pi/4]$. However, when the optimizer was allowed to explore the entire landscape, improvements similar to those observed with random regular graphs were observed. Additionally, it appeared that, for most graph instances, the approximation ratio would increase for almost any value of $0 < \alpha < 1$.

% Discuss optimizing alpha parameter directly with gammas and betas
In our numerical experiments, we performed a fairly course-grained search over the $\alpha$ values, where for each $\alpha$ value, we optimized $(\gamma, \beta)$ using a classical optimizer. In practice, the parameter $\alpha$ could be optimized directly along with $\gamma, \beta$ using the classical optimizer. Considering $\alpha$ as a new parameter, the fact that we only introduce one parameter to the standard QAOA anstaz, regardless of $p$, is one advantage over other proposed ansatz variances, such as MA-QAOA and QAOA+ \cite{herrman2021multianglequantumapproximateoptimization, Chalupnik_QAOA+}.

% Discussion on number of 2-qubit gates 
One potential concern with our proposed QAOA ansatz is the number of two-qubit gates introduced into the circuit, as two-qubit gates are the primary source of error on modern NISQ devices. The number of two-qubit gates added is equal to the number of new edges in the intersection of $G'$ and the complement of $G$. Therefore, the number of additional two-qubit gates depends on the original graph and the method used (e.g., Full or Triangle), providing some flexibility in our approach. Users can adjust the ansatz based on their specific needs. Additionally, if we were to increase $p$ by 1 instead, the number of additional two-qubit gates would equal the number of edges in $G$.

\begin{figure}
    \centering
    \begin{subfigure}[b]{0.4\textwidth}
        \centering
        \includegraphics[width=\textwidth]{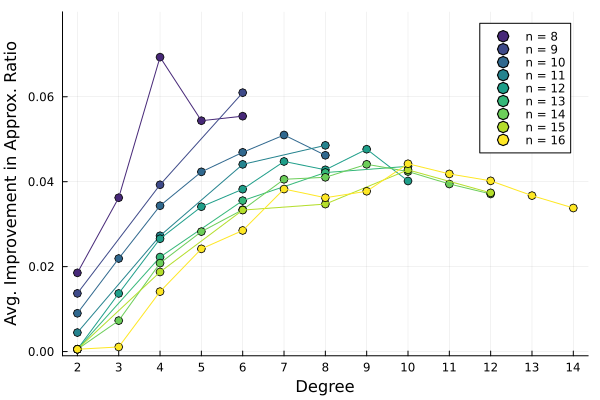}
        \caption{}
        \label{fig:avg_improvement_p1_full}
    \end{subfigure}
    \hfill
    \begin{subfigure}[b]{0.4\textwidth}
        \centering
        \includegraphics[width=\textwidth]{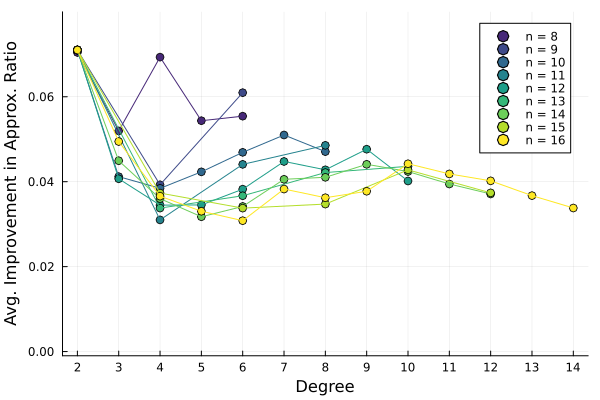}
        \caption{}
        \label{fig:avg_improvement_p1_triangle}
    \end{subfigure}
    \vfill
    \begin{subfigure}[b]{0.4\textwidth}
        \centering
        \includegraphics[width=\textwidth]{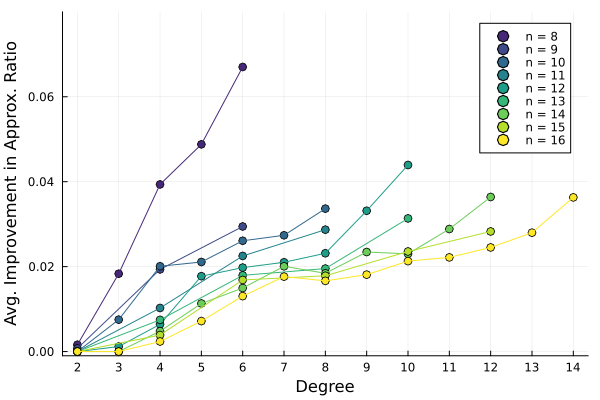}
        \caption{}
        \label{fig:avg_improvement_p2_full}
    \end{subfigure}
    \hfill
    \begin{subfigure}[b]{0.4\textwidth}
        \centering
        \includegraphics[width=\textwidth]{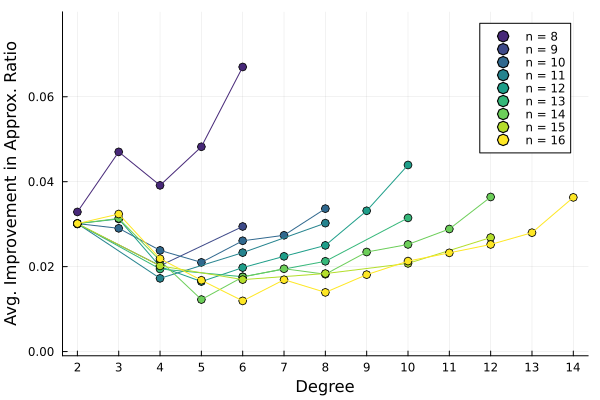}
        \caption{}
        \label{fig:avg_improvement_p2_triangle}
    \end{subfigure}
    \caption{\footnotesize Average improvement in approximation ratio vs degree for nodes 8-16: (a) p=1 Full method, (b) p=1 Triangle method, (c) p=2 Full method, (d) p=2 Triangle method}
    \label{fig:avg_improvement_vs_degree}
\end{figure}

\begin{figure}
    \centering
    \begin{subfigure}[b]{0.4\textwidth}
        \centering
        \includegraphics[width=\textwidth]{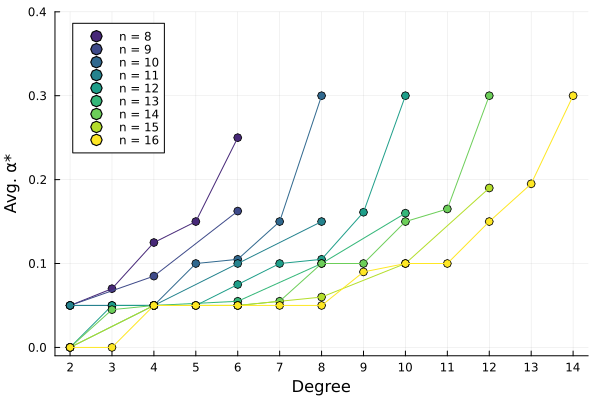}
        \caption{}
        \label{fig:alpha_max_p1_full}
    \end{subfigure}
    \hfill
    \begin{subfigure}[b]{0.4\textwidth}
        \centering
        \includegraphics[width=\textwidth]{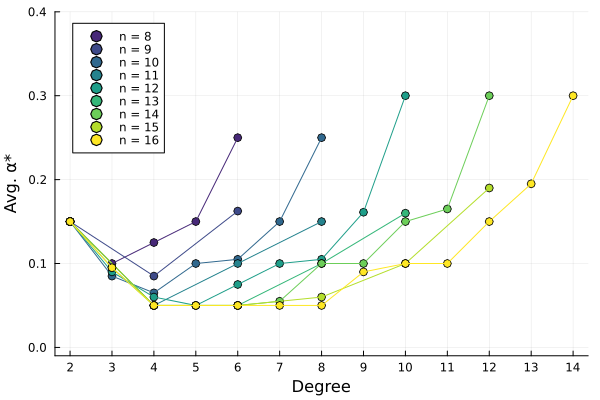}
        \caption{}
        \label{fig:alpha_max_p1_triangle}
    \end{subfigure}
    \vfill
    \begin{subfigure}[b]{0.4\textwidth}
        \centering
        \includegraphics[width=\textwidth]{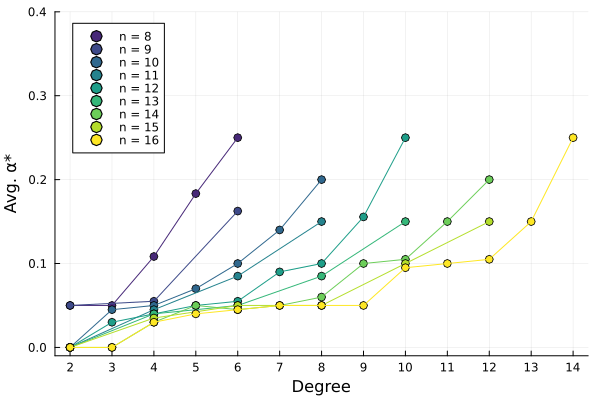}
        \caption{}
        \label{fig:alpha_max_p2_full}
    \end{subfigure}
    \hfill
    \begin{subfigure}[b]{0.4\textwidth}
        \centering
        \includegraphics[width=\textwidth]{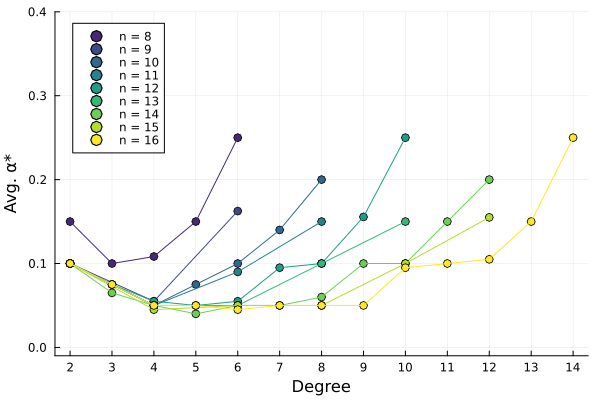}
        \caption{}
        \label{fig:alpha_max_p2_triangle}
    \end{subfigure}
    \caption{\footnotesize Average optimal $\alpha$ for (a) Full method at $p=1$, (b) Triangle method at $p=1$, (c) Full method at $p=2$, and (d) Triangle method at $p=2$.}
    \label{fig:alpha_max_vs_degree}
\end{figure}

% Discussion on circuit depth
Circuit depth is another important consideration. According to Vizing's theorem \cite{Vizing1965}, the minimum number of colors required to edge-color a graph---such that no two edges sharing the same vertex have the same color---is either $\Delta(G)$ or $\Delta(G)+1$ where $\Delta(G)$ is the maximum degree of the graph; moreover, an algorithm by Misra and Gries shows that an edge-coloring with $\Delta(G)+1$ colors can be found in polynomial time \cite{misra1992constructive}. In our context, for each QAOA layer, each edge in the graph will correspond to a 2-qubit gate. Then all qubit gates associated with edges of the same color can be executed simultaneously on quantum devices that allow for parallel circuit execution (e.g. some of IBM's superconducting quantum computers \cite{niu2022parallel}). Consequently, for each QAOA layer, the total circuit depth will be proportional to the number of colors required to edge-color the graph, which is bounded by the maximum degree of the graph. When going from the original graph $G$ to the modified graph $G'$ using the Triangle method, we have that $\Delta(G') \leq \Delta(G)^2$; note that this inequality is tight in the case where $G$ is a cycle of length at least 5. For graphs where this inequality is close to tight (e.g. random sparse graphs), this leads to a quadratic increase in the circuit depth which may possibly be too much in regards to dealing with noise. On the other hand, for increasingly denser graphs, we have that both $\Delta(G)$ and $\Delta(G')$ will approach $n-1$ implying that the increase in circuit depth will approach 0. 

From the graph-visibility perspective, intuition suggests that our approach would be ill-suited for dense graphs since QAOA would already be able to ``see" most of the graph at low circuit depths; however, the empirical results and the circuit depth analysis above indicate otherwise and in fact suggest that our approach is perhaps \emph{more} well-suited for dense graphs (compared to sparse graphs).

%This quadratic increase in maximum degree (and hence circuit depth of the cost unitary) may be impractical for certain quantum devices; for this reason, a future research direction could be in designing more sophisticated approaches for adding edges that strike a balance between the circuit depth and the (noiseless) solution quality.

\begin{figure}[htb]
    \centering
    \begin{subfigure}[b]{0.4\textwidth}
        \centering
        \includegraphics[width=\textwidth]{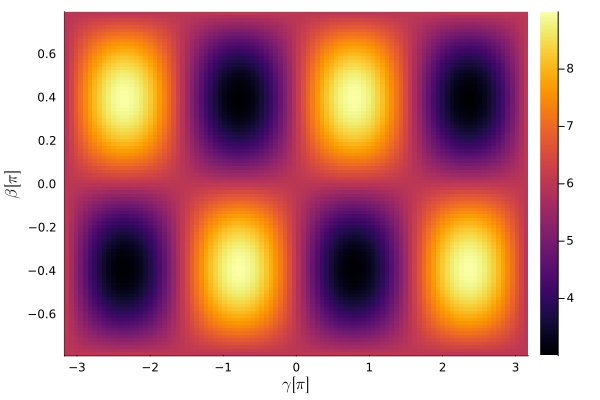}
        \caption{}
        \label{fig:standard_subqaoa}
    \end{subfigure}
    \hfill
    \begin{subfigure}[b]{0.4\textwidth}
        \centering
        \includegraphics[width=\textwidth]{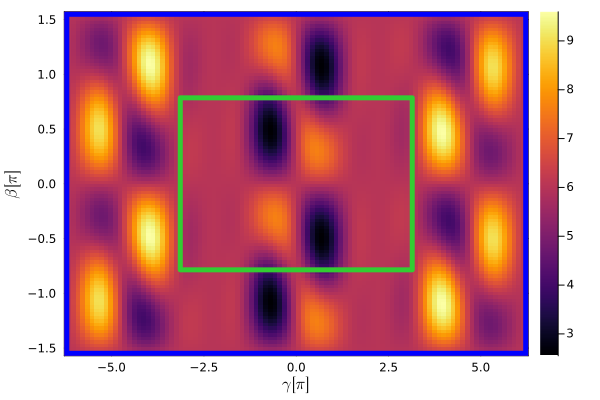}
        \caption{}
        \label{fig:modified_suboperator}
    \end{subfigure}
    \caption{\footnotesize Energy landscapes for (a) standard QAOA and (b) Phantom-QAOA with $G'$ constructed using the Triangle method, setting $\alpha=0.7$. The region within the green border is the region we explore in our numerical experiments. We see that greater peak energies can be found outside of the original search space.}
    \label{fig:qaoa_search_space_comparison}
\end{figure}

\section{Conclusions and Future Directions}
\label{sec:conclusions}

In this paper, we present a new QAOA ansatz that incorporates one additional parameter, regardless of system size. We derive a general formula for the expectation value of this new ansatz for $p=1$, and numerically show an improvement in the approximation ratio over the standard QAOA ansatz for $p=1$ and $p=2$. Additionally, both our analytical and numerical results in this paper suggest that the additional weighted edges should be connected to nodes in $G$ as to form triangles, as additional edges elsewhere in the graph do not appear to yield an improvement to the optimal approximation ratio, at least for $p=1$ and $p=2$.

One potential modification to the ansatz is to increase the number of parameters by assigning different weights to the additional edges rather than using a single weight for all of them. This approach would allow for greater flexibility and expressibility in the ansatz, potentially leading to higher maximum approximation ratios than those achievable with a single weight for all the additional edges. However, this increased expressibility would come at the cost of a more challenging optimization. Future work could explore balancing the number of parameters and the complexity of optimization to achieve optimal performance.

Another potential direction for future research is to explore alternative modifications to $G'$, beyond the Full and Triangle methods discussed in this paper. It is possible that for larger circuit depths, the optimal placement of the additional $\alpha$-weighted edges in $G'$ may differ from the configurations that are most effective at $p=1$ and $p=2$ and for the graph sizes considered in this work. As the QAOA algorithm's depth increases, its ability to "see" the graph also expands. This expanded visibility could mean that connecting different nodes, or even using more complex patterns of additional edges, might lead to better approximation ratios for deeper circuits.

Lastly, to improve the practical utility of our new ansatz, a potential direction for future work could involve optimizing the trade-off between the number of edges added to the graph and the resulting improvement in the approximation ratio. One of the primary challenges in implementing our approach on real quantum hardware is the increase in the number of two-qubit gates required in the quantum circuit when additional edges are introduced, as adding too many can significantly impact the performance and scalability of the algorithm. Therefore, it would be valuable to further investigate how many additional edges should be introduced and where to place them strategically to maximize the improvement in the approximation ratio while minimizing the overhead in circuit complexity. This could involve developing intelligent algorithms to select the most impactful edges to add, thereby optimizing both the quantum resources and the algorithm's performance.

\section*{Acknowledgment}
This work was supported by the U.S. Department of Energy through the Los Alamos National Laboratory. Los Alamos National Laboratory is operated by Triad National Security, LLC, for the National Nuclear Security Administration of U.S. Department of Energy (Contract No. 89233218CNA000001). The research presented in this article was supported by the Laboratory Directed Research and Development program of Los Alamos National Laboratory under project number 20230049DR as well as by the NNSA's Advanced Simulation and Computing Beyond Moore's Law Program at Los Alamos National Laboratory. Report Number: LA-UR-24-30655.

\bibliographystyle{plain}
\bibliography{references}

\end{document}